\newtheorem{theorem}{Theorem}
\newtheorem{lem}{Lemma}
\newcommand{\bb}{\mathbb}
\newcommand{\hket}[1]{\ket{\hat{#1}}}
\newcommand{\cat}{\text{cat}}
\newcommand{\ket}[1]{\ensuremath{\lvert #1 \rangle}}
\newcommand{\bra}[1]{\ensuremath{\langle #1 \rvert}}
\newcommand{\csize}{c}
\newcommand\blfootnote[1]{%
  \begingroup
  \renewcommand\thefootnote{}\footnote{#1}%
  \addtocounter{footnote}{-1}%
  \endgroup
}
\begin{document}
\title{Improved upper bounds on the stabilizer rank of magic states}

\author{Hammam Qassim }
\affiliation{Institute for Quantum Computing, University of Waterloo, Ontario}
\affiliation{Department of Physics and Astronomy, University of Waterloo, Ontario}
\affiliation{Keysight Technologies Canada, Inc.}
\author{Hakop Pashayan}
\affiliation{Institute for Quantum Computing, University of Waterloo, Ontario}
\affiliation{Department of Combinatorics and Optimization, University of Waterloo, Ontario}
\affiliation{Perimeter Institute for Theoretical Physics, Waterloo, Ontario}

\author{David Gosset}
\affiliation{Institute for Quantum Computing, University of Waterloo, Ontario}
\affiliation{Department of Combinatorics and Optimization, University of Waterloo, Ontario}
\affiliation{Perimeter Institute for Theoretical Physics, Waterloo, Ontario}

\maketitle
\begin{abstract}
In this work we improve the runtime of recent classical algorithms for strong simulation of quantum circuits composed of Clifford and T gates. The improvement is obtained by establishing a new upper bound on the stabilizer rank of $m$ copies of the magic state $\ket{T}=\sqrt{2}^{-1}(\ket{0}+e^{i\pi/4}\ket{1})$ in the limit of large $m$. In particular, we show that $|T\rangle^{\otimes m}$ can be exactly expressed as a superposition of at most $O(2^{\alpha m})$ stabilizer states, where $\alpha\leq 0.3963$, improving on the best previously known bound $\alpha \leq 0.463$. 
This furnishes, via known techniques, a classical algorithm which approximates output probabilities of an $n$-qubit `Clifford + T' circuit $U$ with $m$ uses of the T gate to within a given inverse polynomial relative error using a runtime $\mathrm{poly}(n,m)2^{\alpha m}$. We also provide improved upper bounds on the stabilizer rank of symmetric product states $\ket{\psi}^{\otimes m}$ more generally; as a consequence we obtain a strong simulation algorithm for circuits consisting of Clifford gates and $m$ instances of any (fixed) single-qubit $Z$-rotation gate with runtime  $\text{poly}(n,m) 2^{m/2}$. We suggest a method to further improve the upper bounds by constructing linear codes with certain properties. 
\end{abstract}

\maketitle
\blfootnote{A preliminary version of our results was reported in the first author's Ph.D thesis~\cite{qassim}.}

Recently proposed classical algorithms for simulating quantum circuits dominated by Clifford gates~\cite{garcia2014PhD,bravyi2016trading,bravyi2016improved,bravyi2018simulation} have an exponential scaling only in the number of non-Clifford gates in the circuit.  Such algorithms are based on representing the output state $\ket{\Psi}=U|0^n\rangle$ of an $n$-qubit circuit $U$ as a superposition
\begin{align}\label{stab decomp}
\ket{\Psi} = \sum_{i=1}^k c_i \ket{\Phi_i},
\end{align}
where $\{\ket{\Phi_i}\}$ are $n$-qubit stabilizer states, each of which requires only $O(n^2)$ classical bits to store. Given such a representation we can compute an amplitude $\langle x|U|0^n\rangle$  corresponding to a given basis state $x\in \{0,1\}^n$ by computing a weighted sum of the $k$ amplitudes $\langle x|\Phi_i\rangle$, each of which can be computed in polynomial time via the stabilizer formalism. It was shown in Ref.~\cite{bravyi2016improved} that we can also approximate a measurement probability $p(y)=\| |y\rangle\langle y|_{\mathrm{out}} |\Psi\rangle\|^2$ for an output register containing $w\leq n$ qubits and $y\in \{0,1\}^w$ to within a given inverse polynomial relative error using a runtime linear in $k$ and polynomial in the number of qubits. Simulations of this type are advantageous if a stabilizer decomposition Eq.~\eqref{stab decomp} can be found with $k \ll 2^n$. The \textit{stabilizer rank} $\chi(\Psi)$ is defined as the minimum possible $k$ over all decompositions of the form Eq.~\eqref{stab decomp}. Let us suppose that $\Psi$ is the output state of a circuit $U$  composed of Clifford gates and at most $m$ T gates.   In this case it is shown in Ref.~\cite{bravyi2016trading} that $\chi(\Psi)\leq \chi(T^{\otimes m})$ where $\ket{T} = 2^{-1/2}(\ket{0}+e^{i\pi/4} \ket{1})$ is a single-qubit magic state.  In this way the stabilizer rank of magic states $\chi(T^{\otimes m})$ directly relates to the complexity of computing high-precision (exact, or inverse polynomial relative error) approximations of amplitudes and probabilities of Clifford+T circuits.  Though it is not our focus in this paper, we note that one may also consider a less stringent weak simulation task where the goal is to approximately sample from the output distribution of the quantum circuit to within a given inverse polynomial total variation distance.  Ref.~\cite{bravyi2016improved} defines an approximate stabilizer rank and shows how it is similarly related to the runtime of weak simulation.

Computing the stabilizer rank of a given $n$-qubit state appears to be intractable. The main difficulty is that the number of stabilizer states grows as $2^{\Theta(n^2)}$, and searching for a decomposition of the form Eq.~\eqref{stab decomp} by exhaustively considering $k$-tuples of $n$-qubit stabilizer states is out of the question, even for very small $n$ and $k$. As discussed above, we are interested in upper bounding the stabilizer rank of magic states $\ket{T}^{\otimes m}$ as this captures the classical simulation cost for Clifford+T circuits with $m$ T gates. The fact that this is a product state has been exploited in previous work to bound its stabilizer rank.  Indeed, for any two states $\phi_1, \phi_2$ we have the trivial sub-multiplicativity property $\chi(\phi_1\otimes \phi_2)\leq \chi(\phi_1)\chi(\phi_2)$. Previous works have upper bounded $\chi(T^{\otimes m})$ by constructing low-rank decompositions of $\ket{T}^{\otimes \csize}$ for small $\csize$ and then using this sub-multiplicativity. For example, the fact that $\ket{T}^{\otimes 2}$ can be decomposed in terms of two stabilizer states;
\begin{align}
\ket{T}^{\otimes 2} = \frac{1}{2}\Big(\ket{00}+ i \ket{11} \Big) + \frac{e^{i \pi/4}}{2}\Big(\ket{01}+\ket{10} \Big),  
\end{align}
implies $\chi(T^{\otimes m}) = O(2^{m/2})$. This bound was improved in Ref.~\cite{bravyi2016trading} by constructing stabilizer decompositions of $\ket{T}^{\otimes c}$ for $c\leq 6$ using a heuristic simulated-annealing based computer search. The best decomposition found in Ref.~\cite{bravyi2016trading} uses 7 stabilizer states to decompose $\ket{T}^{\otimes 6}$, implying that $\chi(T^{\otimes 6}) \leq 7$ and therefore $\chi(T^{\otimes m}) \leq O(2^{\alpha m})$ where $\alpha=\log_2(7)/6 \leq 0.468$. The equality $\chi(T^{\otimes 6}) = 7$ was conjectured to hold, suggesting that further improvement requires low-rank decompositions of 7 or more copies of $\ket{T}$. Ref.~\cite{kocia2020improved} shows that  $\chi(T^{\otimes 12})\leq 47$, which further reduces the upper bound on the exponent $\alpha$ to roughly $0.463$. Though it is not our focus in this work, we note that existing \textit{lower} bounds on the stabilizer rank of magic states are unsatisfying: the best known unconditional lower bound is $\chi(T^{\otimes m})=\Omega(m)$~\cite{peleg2021lower}. 
On the other hand this quantity must increase exponentially with $m$ unless \#P-complete problems can be solved in sub-exponential time on a classical computer (see, e.g.,  Refs.~\cite{huang2019explicit, morimae2019fine, huang2019tgates}).
\begin{figure}[t]
\centering
\includegraphics[scale=0.045]{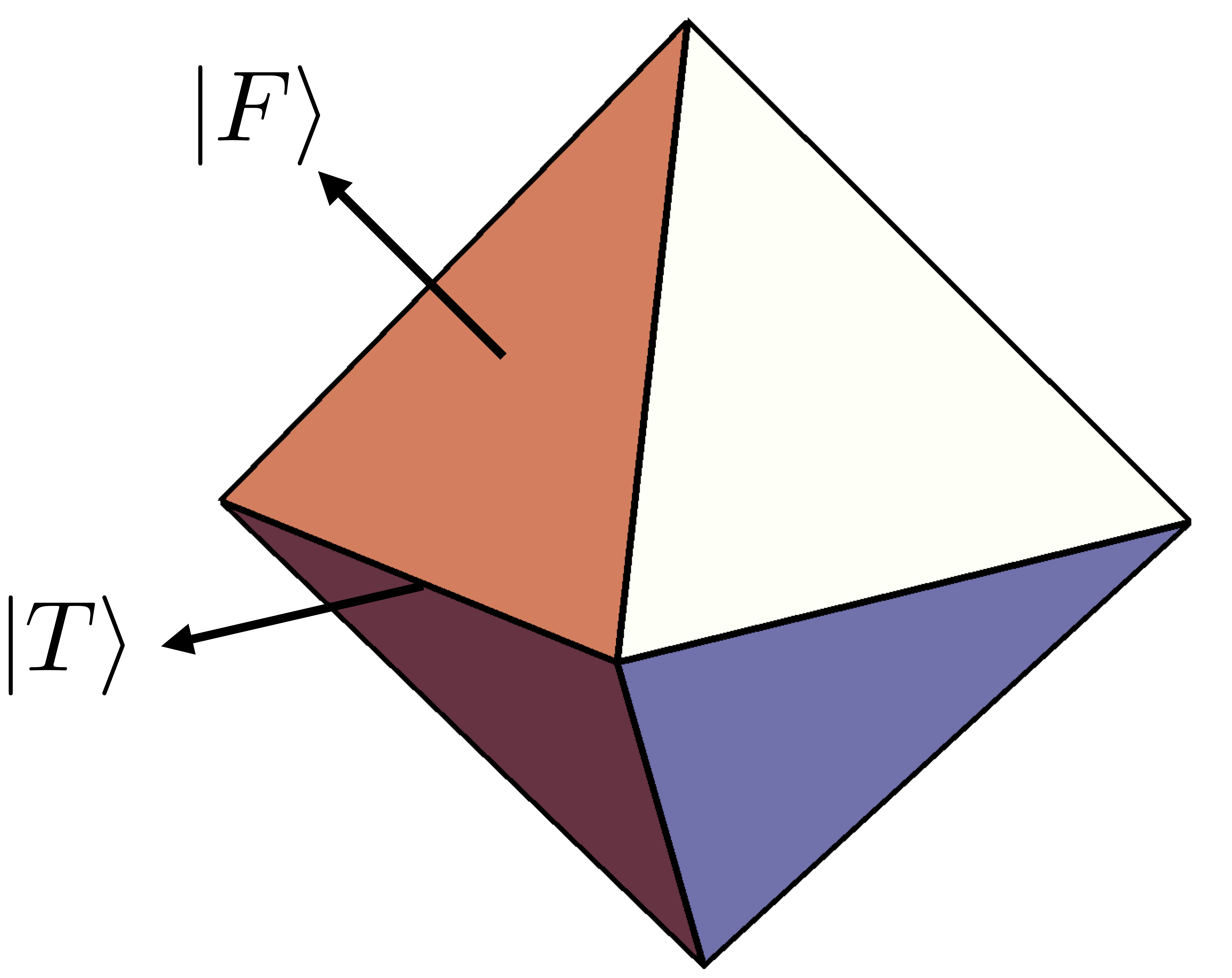}
\caption{The stabilizer octahedron, whose vertices are the six single qubit stabilizer states. The two types of magic states, $\ket{T}, \ket{F}$, correspond to the edges and faces of this octahedron, respectively~\cite{bravyi2005universal}. It was conjectured in Ref.~\cite{bravyi2016trading} that $\chi(T^{\otimes m}) = \chi(F^{\otimes m})$.}
\label{fig: magic states}
\end{figure} 

In this paper, we describe a new method of constructing low-rank stabilizer decompositions of $\ket{T}^{\otimes m}$ and obtain the improved upper bound $\chi(T^{\otimes m})\leq O(2^{\alpha m})$ with $\alpha = \log_2(3)/4 \leq 0.3963$. Unlike previous works, our strategy does not involve bounding $\chi(T^{\otimes \csize})$ for some small $\csize$ and then using sub-multiplicativity. Rather, we construct low-rank stabilizer decompositions of the state $\ket{T}^{\otimes m}$ by contracting certain entangled cat states in the magic basis.  Along the way, we establish that $\chi(T^{\otimes 6}) \leq 6$, disproving the conjecture that it is equal to $7$. We generalize our upper bound to the other species of magic states (see Fig.~\ref{fig: magic states}), as well as to other symmetric product states. Finally, we discuss a generalization of our method which is based on a family of quantum states corresponding to linear codes.
\paragraph{Magic cat states} For each $m\geq 1$,  define the $m$-qubit cat state in the magic basis 
\begin{equation}
\ket{\cat_m} \equiv 2^{-1/2}(\ket{T}^{\otimes m} + \ket{T^\perp}^{\otimes m}),
\label{eq:magiccat}
\end{equation}
where $\ket{T^\perp} \equiv Z\ket{T}$. Similar to the states $\ket{T}^{\otimes m}$, the stabilizer rank of $\ket{\cat_m}$ is non-decreasing in $m$, i.e. $\chi(\cat_m) \geq \chi(\cat_{m-1})$, which follows by noting that $\ket{\cat_{m-1}} \propto (\bra{0} \otimes I) \ket{\cat_m}$. Furthermore it is easy to see that the stabilizer rank of this magic cat state is within a factor of $2$ of the stabilizer rank of $\ket{T}^{\otimes m}$. In particular, we have
\begin{equation}
\frac{\chi(T^{\otimes m})}{2} \leq \chi(\cat_m)\leq \chi(T^{\otimes m})
\label{eq:catt}
\end{equation}
The upper bound follows because $\ket{\cat_m} =\sqrt{2}^{-1} (I + Z^{\otimes m})\ket{T}^{\otimes m}$, where $2^{-1}(I + Z^{\otimes m})$ is a stabilizer projector which does not increase the stabilizer rank. The lower bound in Eq.~\eqref{eq:catt} follows by using the fact that $\ket{T}\bra{T} = \frac{1}{2}(I + A)$ for a single-qubit Clifford $A$\footnote{Namely $ A= e^{-i\pi/4} S X$ where $X$ is the usual Pauli matrix and $S = \text{diag}(1,i)$.}, and therefore $\ket{T}^{\otimes m} = \sqrt{2} (\ket{T}\bra{T}\otimes I)\ket{\cat_m} \propto \ket{\cat_m}+(A \otimes I)\ket{\cat_m}$ where the right-hand-side has stabilizer rank of at most $2\chi(\cat_m)$.

We shall take advantage of the fact that the stabilizer rank of the $m$-qubit magic cat state is less than that of $|T\rangle^{\otimes m}$ for small $m$. For $m=2,6$ we find the following stabilizer decompositions
\begin{align}\label{T cat decomps}
\ket{\cat_2}&= 2^{-1/2}(\ket{0}^{\otimes 2}+ i \ket{1}^{\otimes 2}),\nonumber\\
\ket{\cat_6} &= 2^{-3/2}(\ket{0}^{\otimes 6} 
- i\ket{1}^{\otimes 6})+  2^{-1/2} e^{3 i \pi/4} (\ket{E_6} +  i \ket{K_6}),
\end{align} 
where we defined the stabilizer states\footnote{A state is a stabilizer state if and only if its expansion in the computational basis is of the form $|A|^{-1/2}\sum_{x \in A} i^{l(x)} (-1)^{q(x)} \ket{x}$, where A is an affine subspace of $\mathbb{F}_2^n$ and $l:\mathbb{F}_2^n \rightarrow \mathbb{F}_2$ and $q:\mathbb{F}_2^n \rightarrow \mathbb{F}_2$ are linear and quadratic functions, respectively~\cite{Dehaene2003Clifford, vdnest2010classical}.}
$$\ket{E_m} \equiv 2^{-(m-1)/2}\sum_{\underset{|x| \text{ even}}{x \in \{ 0,1\}^m}: } \ket{x}, \quad \ket{K_m} \equiv \prod_{1\leq i<j\leq m} CZ_{ij}\ket{E_m}.$$
From these decompositions we see that 
\begin{equation}
\chi(\cat_2) = 1 \quad \text{ and }\quad  \chi(\cat_6) \leq 3.
\label{eq:cat26}
\end{equation}

The latter inequality already disproves the conjecture that $\chi(T^{\otimes 6}) = 7$. Indeed, using Eq.~\eqref{eq:catt} we infer 
\begin{equation}
\chi(T^{\otimes 6}) \leq 2 \chi(\cat_6) \leq 6.
\label{eq:t6}
\end{equation}
\begin{figure}[t]
\centering
\includegraphics[scale=0.12]{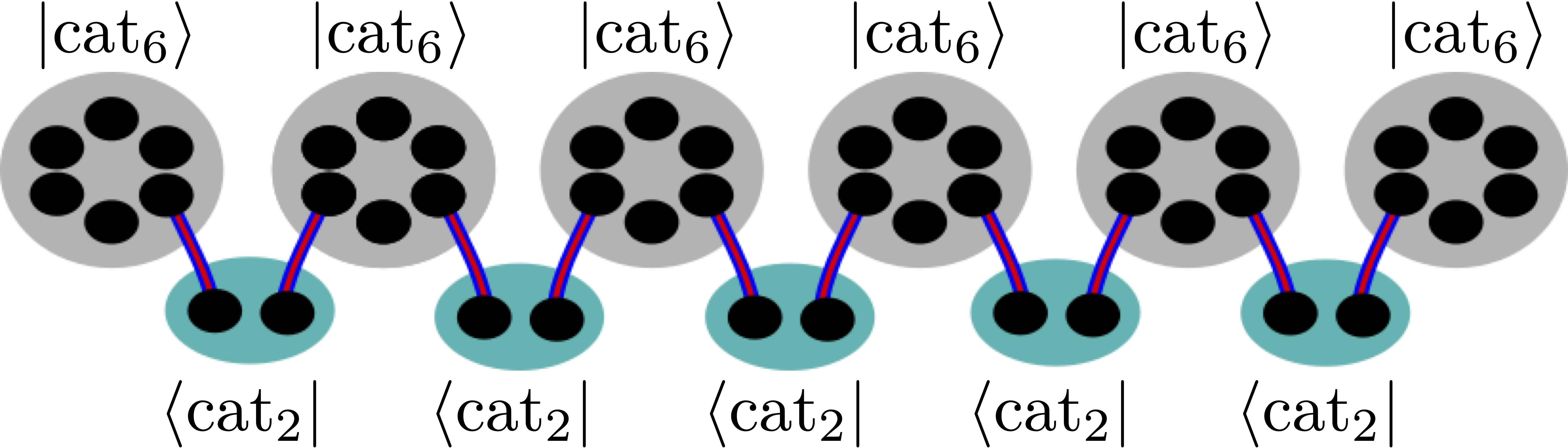}
\caption{The chain used in the proof of Theorem \ref{res magic}. Here $\ell = 6$ and the state after the contraction is $\ket{\cat_{26}}$.}
\label{fig: cat chain 1}
\end{figure} 
Following the strategy from Ref.~\cite{bravyi2016trading} we may then use sub-multiplicativity of the stabilizer rank to obtain $\chi(T^{\otimes m}) = O(2^{\alpha m})$ with $\alpha = \log_2(6)/6\leq 0.4308$.

But we can do better by using Eqs.~\eqref{eq:cat26} in a different way. 
Suppose we act with the stabilizer state $\bra{\cat_2}$ on one qubit from each tensor factor in $\ket{\cat_6}\ket{\cat_6}$.  The resulting state is 
\[
\langle \cat_2|_{6,7} ~\ket{\cat_6}\ket{\cat_6}\propto \ket{\cat_{10}}.
\]
By expanding each state $|\mathrm{cat}_6\rangle$ as a superposition of $3$ stabilizer states and using the fact that $\langle \cat_2|$ is a stabilizer state, we obtain a stabilizer decomposition of $\ket{\cat_{10}}$ with only 9 terms. Combining this with Eq.~\eqref{eq:catt} we infer $\chi(T^{\otimes 10})\leq 18$ and therefore $\chi(T^{\otimes m}) = O(2^{\alpha m})$ with $\alpha = \log_2(18)/10\leq 0.417$.

We can obtain further improvements by contracting many stabilizer $\ket{\cat_2}$ states with many copies of $\ket{\cat_6}$ to construct stabilizer decompositions of larger magic cats $|\text{cat}_m\rangle$. In the limit of large $m$ this gives the exponent $\alpha = \log_2(3)/4 \leq 0.3963$. To see this, in Fig.~\ref{fig: cat chain 1} we start with a chain of length $\ell$ containing a $\ket{\cat_6}$ state at each site, and we contract nearest neighbours using $\bra{\cat_2}$. After the contraction the state of the remaining qubits  is (up to normalization) $\ket{\cat_{4\ell + 2}}$.
Therefore a stabilizer decomposition of $\ket{\cat_{4\ell + 2}}$ exists using $3^\ell$ stabilizer states. Using Eq.~\eqref{eq:catt} we have
\[
\frac{\chi(T^{\otimes (4\ell+2)})}{2}\leq \chi(\cat_{4\ell+2})\leq 3^{\ell} \qquad \ell\geq 2.
\]
Therefore one can decompose $\ket{T}^{\otimes m}$ using $O(2^{\alpha m})$ stabilizer states, where $\alpha = \log_2(3)/4$. 

The above argument can be readily generalized to the other (Clifford-inequivalent) single-qubit magic state $\ket{F} = \cos(\beta)\ket{0} + e^{i \pi/4} \sin(\beta) \ket{1}$ where $\cos(2\beta) = \frac{1}{\sqrt{3}}$, $0 \leq \beta \leq \pi$ \cite{bravyi2005universal}. Defining $\ket{F^\perp} = \sin(\beta)\ket{0} - e^{i \pi/4} \cos(\beta) \ket{1}$ and $\ket{\cat_m(F)} = 2^{-1/2}(\ket{F}^{\otimes m} + \ket{F^\perp}^{\otimes m})$ we find
\begin{align}\label{F cat decomps}
\ket{\cat_2(F)}&= 2^{-1/2}(\ket{0}^{\otimes 2}+ i \ket{1}^{\otimes 2}),\nonumber\\
\ket{\cat_6(F)} &= (2/3) \ket{\psi_1}
+ e^{3i\pi/4}(2/3) \ket{\psi_2} - e^{i\pi/4}(2/3) \ket{\psi_3},
\end{align} 
where $\ket{\psi_i}$ are the stabilizer states
\begin{align}
\ket{\psi_1} &= 2^{-1/2}(\ket{0}^{\otimes 6} - i \ket{1}^{\otimes 6}),\nonumber\\
\ket{\psi_2} &= 2^{-3} \sum_{x \in \{0,1 \}^6} (-i)^{|x| \text{ mod }2} \ket{x},\nonumber\\
\ket{\psi_3} &= 2^{-3} \sum_{x \in \{0,1 \}^6} (-1)^{|x|(|x|+1)/2} \ket{x}.
\end{align}
Therefore $\chi(\cat_2(F)) = 1$ and $\chi(\cat_6(F)) \leq 3$, and so the above method used to construct stabilizer decompositions of $\ket{T}^{\otimes m}$ can be applied directly to $\ket{F}^{\otimes m}$. We summarize the preceding discussion in the following theorem.
 
 \begin{theorem}\label{res magic}
For $\ket{\psi} \in \{\ket{T}, \ket{F}\}$ we have
$
\chi(\psi^{\otimes m}) = O ( 2^{\alpha m } ),
$
where $\alpha = \log_2(3)/4 \leq 0.3963$.
\end{theorem}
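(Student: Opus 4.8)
The plan is not to bound $\chi(\psi^{\otimes m})$ directly but to route through the magic cat states and build large cats out of small ones by contraction. The one ingredient I need from Eq.~\eqref{eq:catt} is its left inequality, $\chi(\psi^{\otimes m})\leq 2\,\chi(\cat_m)$, which reduces the theorem to producing an economical stabilizer decomposition of $\ket{\cat_m}$ for an infinite family of $m$; sub-multiplicativity of $\chi$ will then fill in the remaining values of $m$ at only a constant-factor cost. So the real task is to exhibit, for each large $m$ in a suitable arithmetic progression, a decomposition of $\ket{\cat_m}$ into about $2^{\alpha m}$ stabilizer states.

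First I would record the two explicit small-case decompositions in Eq.~\eqref{T cat decomps} (and the $F$-analogues in Eq.~\eqref{F cat decomps}), checking by direct expansion that each listed summand ($\ket{E_6}$ and $\ket{K_6}$, resp.\ $\ket{\psi_1},\ket{\psi_2},\ket{\psi_3}$) is a genuine stabilizer state and that the stated linear combination reproduces $\ket{\cat_6}$; this gives $\chi(\cat_2)=1$ and $\chi(\cat_6)\leq 3$ as in Eq.~\eqref{eq:cat26}. The central step is a contraction (``fusion'') identity: contracting one qubit of $\ket{\cat_p}$ against one qubit of $\ket{\cat_q}$ with the two-qubit stabilizer bra $\bra{\cat_2}$ produces $\ket{\cat_{p+q-2}}$ up to a nonzero scalar. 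This is transparent in the magic basis: since $\{\ket T,\ket{T^\perp}\}$ is orthonormal and $\ket{\cat_2}=2^{-1/2}(\ket{T}^{\otimes 2}+\ket{T^\perp}^{\otimes 2})$ is supported only on the two ``aligned'' sectors, $\bra{\cat_2}$ forces the two contracted factors into the same branch and glues them, leaving a superposition of the all-$\ket T$ and all-$\ket{T^\perp}$ strings on the surviving qubits, i.e.\ a larger cat. Iterating along a chain of $\ell$ copies of $\ket{\cat_6}$ with nearest-neighbour contractions (Fig.~\ref{fig: cat chain 1}), connectivity forces a single global branch choice, so the output is $\ket{\cat_{4\ell+2}}$ --- the surviving-qubit count $2\cdot 5+(\ell-2)\cdot 4=4\ell+2$ fixes the index.

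For the rank count I would use the standard fact that contracting a stabilizer state against a stabilizer bra returns a stabilizer state up to normalization (possibly zero): expanding each of the $\ell$ factors $\ket{\cat_6}$ into its three stabilizer terms and distributing the $\ell-1$ contractions sends each of the $3^\ell$ product terms to a single stabilizer term, whence $\chi(\cat_{4\ell+2})\leq 3^\ell$. Feeding this back through Eq.~\eqref{eq:catt} gives $\chi(\psi^{\otimes(4\ell+2)})\leq 2\cdot 3^\ell$, and writing $m=4\ell+2$ we have $3^\ell=2^{\alpha(m-2)}$ with $\alpha=\log_2(3)/4$, so this is $O(2^{\alpha m})$. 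A final use of sub-multiplicativity, padding an arbitrary $m$ up to the nearest $4\ell+2$ with a bounded number of extra tensor factors, extends the bound to all $m$, and the argument for $\ket F$ is verbatim via Eq.~\eqref{F cat decomps}.

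The step I expect to demand the most care is the fusion identity together with its bookkeeping. The clean ``all branches align'' picture needs the boundary and interior sites of the chain treated separately to land on the index $4\ell+2$ exactly, and I must confirm that the two aligned sectors survive without destructive cancellation so that the contracted state is a genuine cat rather than zero or a single stabilizer state with a spurious normalization. Equally I want to be explicit that distributing one $\bra{\cat_2}$ contraction over a sum of stabilizer terms is term-by-term and does not multiply the number of terms --- this is exactly what keeps the count at $3^\ell$ instead of blowing up. None of this is deep, but it is where an otherwise slick argument can quietly lose a factor.
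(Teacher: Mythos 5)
Your proposal is correct and follows essentially the same route as the paper: the inequality $\chi(\psi^{\otimes m})\leq 2\chi(\cat_m)$, the explicit decompositions $\chi(\cat_2)=1$ and $\chi(\cat_6)\leq 3$, the chain of $\ell$ copies of $\ket{\cat_6}$ fused by $\bra{\cat_2}$ contractions to produce $\ket{\cat_{4\ell+2}}$ with at most $3^\ell$ stabilizer terms, and the same final accounting giving $\alpha=\log_2(3)/4$. Your treatment of the fusion identity in the magic basis and the padding to general $m$ is, if anything, slightly more explicit than the paper's own presentation.
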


\begin{table}[t]
\centering
\begin{tabular}{|c|c|c|c|c|c|c|c|}
\hline
m                 & 2 & 3 & 4       & 5       & 6       & 7        & 8        \\ \hline
$\chi(T^{\otimes m})$ & 2 & 3 & $\leq$4 & $\leq$6 & $\leq$6 & $\leq$12 & $\leq$12 \\ \hline
$\chi(\cat_m)$        & 1 & 2 & 2       & 3       & 3       & $\leq$6  & $\leq$6  \\ \hline
\end{tabular}
\caption{Known values of the stabilizer rank of $\ket{T}^{\otimes m}$ and $\ket{\cat_m}$ for small $m$. The top row includes the bounds from Refs.~\cite{bravyi2016trading, bravyi2018simulation} for $m=2,3,4,5,7$ and the bound obtained from  Eq.~\eqref{eq:t6} for $m=6$. The upper bound on $\chi(T^{\otimes 8})$ follows from Eq.~\eqref{eq:catt} and $\chi(\cat_8)\leq 6$.  Further details of this and the other bounds listed in the bottom row are provided in the Appendix.}
\label{tab}
\end{table}

As we have already discussed, this provides the fastest currently known classical algorithm for estimating amplitudes or output probabilities of Clifford+T circuits as a function of the number of T gates. It is an interesting question whether or not Theorem \ref{res magic} can be used to accelerate classical algorithms for other computational problems which are less directly tied to quantum computing\footnote{For example, Ref. \cite{huang2019tgates} shows that if $\chi(T^{\otimes m})\leq 2^{\alpha_0 m}$ where $\alpha_0=2.2451\cdot 10^{-8}$ then one obtains a classical algorithm for 3-SAT which outperforms existing methods.}.

A natural question is whether the bound $\log_2(3)/4$ can be improved using our techniques.
As we show in the Appendix, the stabilizer rank of $|\text{cat}_6\rangle$ is exactly $3$, so an improvement using the above method requires the use of other cat states.
In Table~\ref{tab} we list the known values of $\chi(\cat_m)$ for small $m$, none of which provide a better upper bound than the one in Theorem \ref{res magic}. Ultimately it might be possible to improve the bound by finding low rank decompositions of $\ket{\cat_m}$ for larger $m$. Doing so may be slightly easier than finding decompositions of $\ket{T}^{\otimes m}$ since we always have $\chi(\cat_m)\leq \chi(T^{\otimes m})$. Note that the exponent $\alpha = \log_2(3)/4$ in Theorem \ref{res magic} is only attained asymptotically, i.e. in the limit of an infinitely long chain.

Let us now apply a similar strategy to equatorial states, i.e. state of the form $\ket{R_\theta} = 2^{-1/2}(\ket{0} + e^{i\theta} \ket{1})$. Defining $\ket{R_\theta^\perp} = Z\ket{R_\theta}$ and $\ket{\cat_m(R_\theta)} = 2^{-1/2}(\ket{R_\theta}^{\otimes m} + \ket{R_\theta^\perp}^{\otimes m})$, one finds the stabilizer decompositions

\begin{align}
\ket{\cat_2(R_\theta)} = &2^{-1/2}(\ket{0}^{\otimes 2}+ e^{2 i \theta} \ket{1}^{\otimes 2}), \nonumber\\
\ket{\cat_6(R_\theta)} = 
 &2^{-5/2} \left[(1 - e^{4 i \theta}) \ket{0}^{\otimes 6} 
+(e^{6 i \theta} - e^{2 i \theta}) \ket{1}^{\otimes 6}\right] \nonumber\\
+ &e^{3 i \theta}  \left[  \cos \theta \ket{E_6}
+  i\sin \theta \ket{K_6}\right].
\end{align}
In particular for $\theta \notin (\pi/4) \mathbb{Z}$ we have $\chi(\cat_2(R_\theta)) =  2$ and $\chi(\cat_6(R_\theta)) \leq 4$. Unlike the previous case the two-qubit cat state $\ket{\cat_2(R_\theta)}$ is generally not a stabilizer state. Nevertheless we prove the following theorem using a slightly different contraction of the one-dimensional chain of six-qubit cat states.
\begin{figure}[t]
\centering
\hspace*{-0.5cm}
\includegraphics[scale=0.09]{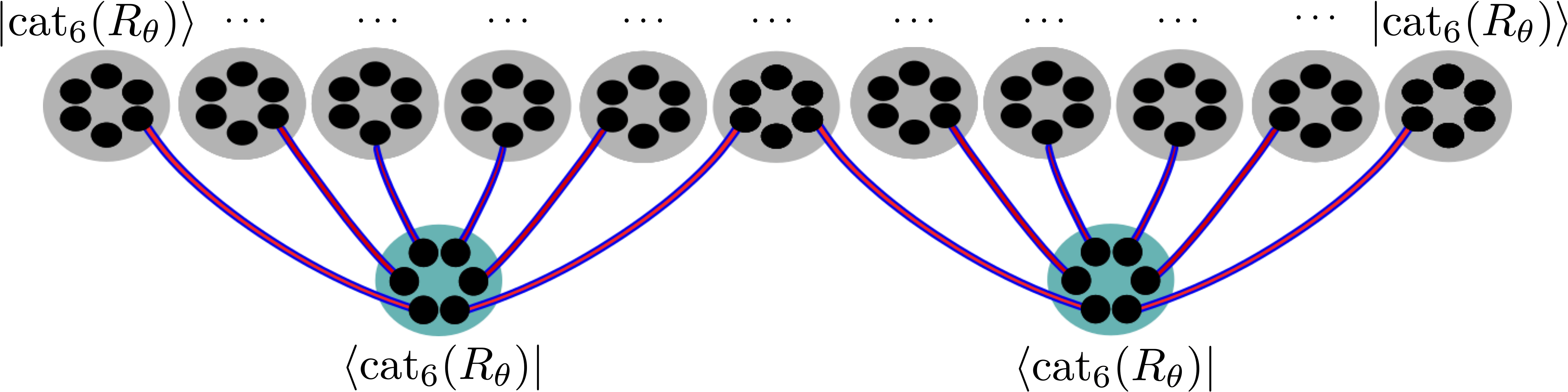}
\caption{The chain used in the proof of Theorem \ref{res equatorial}. Here $t = 2$, $\ell = 11$, and the state after the contraction is $\ket{\cat_{54}(R_\theta)}$.}
\label{fig: cat chain 2}
\end{figure} 
\begin{theorem}\label{res equatorial}
For every angle $\theta$ we have
$
\chi(R_{\theta}^{\otimes m}) = O ( 2^{m/2} ).
$
\end{theorem}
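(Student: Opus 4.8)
The plan is to run the same two-stage architecture as in the proof of Theorem~\ref{res magic}: first reduce the problem to bounding the stabilizer rank of the equatorial cat states $\ket{\cat_m(R_\theta)}$, and then build those cats by contracting a one-dimensional chain of the six-qubit blocks $\ket{\cat_6(R_\theta)}$. First I would establish the equatorial analogue of Eq.~\eqref{eq:catt}. The upper bound $\chi(\cat_m(R_\theta))\le \chi(R_\theta^{\otimes m})$ is immediate from the stabilizer projector $\tfrac12(I+Z^{\otimes m})$ exactly as before. For the reverse direction I cannot invoke $\ket{R_\theta}\bra{R_\theta}=\tfrac12(I+A)$ with $A$ Clifford, since for generic $\theta$ the operator $A=\cos\theta\,X+\sin\theta\,Y$ is not Clifford. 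Instead I would use that $A$ is a fixed real-linear combination of the Cliffords $X$ and $Y$, so that $(A\otimes I)$ applied to any stabilizer state is a superposition of two stabilizer states; writing $\ket{R_\theta}^{\otimes m}=\tfrac{1}{\sqrt2}(I+A\otimes I^{\otimes(m-1)})\ket{\cat_m(R_\theta)}$ then gives $\chi(R_\theta^{\otimes m})\le 3\,\chi(\cat_m(R_\theta))$. The constant factor is $m$-independent, which is all that is needed for the stated $O(2^{m/2})$ bound.

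Next I would set up the contraction. As in Fig.~\ref{fig: cat chain 1}, I would place a copy of $\ket{\cat_6(R_\theta)}$ at each site of a chain and contract neighbouring sites with the two-qubit bra $\bra{\cat_2(R_\theta)}$. The crucial difference from the magic case, and the source of the difficulty, is that $\bra{\cat_2(R_\theta)}\propto\bra{00}+e^{-2i\theta}\bra{11}$ is \emph{not} a stabilizer state. At the level of the logical sectors $\ket{R_\theta}^{\otimes 6}$, $\ket{R_\theta^\perp}^{\otimes 6}$ it still acts as a clean projector onto equal logical charge, so it merges adjacent blocks into a single larger cat; but when expanded in the stabilizer basis each bond becomes a superposition of the two stabilizer bras $\bra{00}$ and $\bra{11}$. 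If one naively expands each block into its four stabilizer terms and each bond into its two terms, a chain of $\ell$ blocks produces $4^{\ell}2^{\ell-1}$ stabilizer states, which only yields the exponent $3/4$ and is not good enough.

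The improvement comes from counting the surviving stabilizer terms much more carefully and from the freedom in the contraction geometry encoded by the parameter $t$ of Fig.~\ref{fig: cat chain 2}. The key observation is that projecting a bond qubit with $\bra{00}$ or $\bra{11}$ annihilates any term in which an adjacent block contributes its computational-basis sector $\ket{0}^{\otimes 6}$ or $\ket{1}^{\otimes 6}$ with the opposite value; tracking, for each block, whether it sits in a computational-basis sector or in one of the superposition sectors $\ket{E_6},\ket{K_6}$, one finds that a given bond carries a branching factor of two only when both of its endpoints lie in a superposition sector. This turns the count into the dominant eigenvalue of a small transfer matrix over the sector types. I would then choose the contraction pattern — rather than contracting every nearest-neighbour bond in the tightest possible way, spacing the branching $\bra{\cat_2(R_\theta)}$ contractions so that each expensive bond is amortised against a larger number of retained qubits — so that the ratio of $\log_2(\text{number of terms})$ to the number $M$ of surviving qubits is driven down to $\tfrac12$ as $t\to\infty$, exactly as the magic-case exponent is attained only in the limit of a long chain. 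Combining this with the first paragraph then gives $\chi(R_\theta^{\otimes m})=O(2^{m/2})$.

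The hard part, and the step I expect to be the main obstacle, is the combinatorial bookkeeping in the third paragraph: one must show that the cancellations together with the chosen geometry genuinely bring the asymptotic growth rate down to $\sqrt2$ per surviving qubit, rather than the $3/4$ exponent of the naive expansion or the strictly larger rate one gets by applying cancellations to a fully contracted chain. Verifying that the transfer-matrix/amortisation estimate attains precisely $\tfrac12$ in the limit, and checking that the $\theta$-dependent phases carried by $\ket{E_6}$, $\ket{K_6}$ and by $\bra{\cat_2(R_\theta)}$ never spoil the stabilizer form of the individual surviving terms, is the delicate part of the argument.
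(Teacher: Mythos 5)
The first two paragraphs of your proposal are sound: your reduction $\chi(R_\theta^{\otimes m})\le 3\,\chi(\cat_m(R_\theta))$ via $\ket{R_\theta}\bra{R_\theta}=\tfrac12(I+\cos\theta\,X+\sin\theta\,Y)$ is essentially the paper's own step (it writes this projector as a sum of at most four Paulis), and you correctly identify that the non-stabilizer bond $\bra{\cat_2(R_\theta)}$ breaks the naive count. The gap is in your third paragraph, which is where the theorem actually lives, and the route sketched there does not reach exponent $1/2$. Your cancellation observation is correct, but carrying it out exactly gives a transfer matrix over the sector labels $\{0,1,E,K\}$,
\[
M=\begin{pmatrix}1&0&1&1\\0&1&1&1\\1&1&2&2\\1&1&2&2\end{pmatrix},
\]
whose largest eigenvalue is $5$ (eigenvalues $5,1,0,0$). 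A chain of $\ell$ blocks therefore yields $\Theta(5^{\ell})$ stabilizer terms on $4\ell+2$ surviving qubits, i.e.\ exponent $\log_2(5)/4\approx 0.580$: better than the naive $3/4$, but strictly worse than $1/2$. Moreover, the ``spacing/amortisation'' freedom you invoke does not exist for this bond type: joining $\ell$ six-qubit blocks into a single cat with two-qubit bonds forces at least $\ell-1$ bonds, each consuming exactly two qubits, so there is no geometric parameter left to tune; and already the assignments in which every block sits in a superposition sector produce $2^{\ell}\cdot 2^{\ell-1}$ terms in your expansion, so even perfect cancellation elsewhere would only tie $1/2$, while the actual count $\Theta(5^\ell)$ sits a factor $(5/4)^{\ell}$ above that target. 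Nothing in your sketch supplies a mechanism to close this gap.

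The paper's proof removes the offending bonds entirely rather than taming them. In Fig.~\ref{fig: cat chain 2} the $t$ lower (turquoise) circles are not $\bra{\cat_2(R_\theta)}$ but six-qubit bras $\bra{\cat_6(R_\theta)}$, each contracted against one qubit from each of six distinct upper blocks, with consecutive lower cats sharing one upper block (hence $\ell=5t+1$). Since $\bra{\cat_6(R_\theta)}=2^{-1/2}\bigl(\bra{R_\theta}^{\otimes 6}+\bra{R_\theta^{\perp}}^{\otimes 6}\bigr)$ and $\inner{R_\theta}{R_\theta^{\perp}}=0$, each lower bra forces its six upper blocks into a common logical sector, so the contraction produces $\ket{\cat_{24t+6}(R_\theta)}$. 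Every tensor factor in the network --- the $5t+1$ upper kets and the $t$ lower bras --- has stabilizer rank at most $4$, and contracting stabilizer bras against stabilizer kets preserves stabilizerness, so the expansion has at most $4^{6t+1}=2^{12t+2}$ terms on $24t+6$ qubits, giving exponent $(12t+2)/(24t+6)\to 1/2$ with no cancellation analysis needed. The moral is a budget count: a bond of rank $4$ spread over $6$ consumed qubits costs $4^{1/6}$ per qubit, which leaves room for the blocks, whereas your bond of rank $2$ over $2$ qubits costs $2^{1/2}$ per qubit, which already exhausts the entire budget before the blocks contribute anything. With the contraction replaced in this way, your first paragraph then finishes the proof exactly as the paper does.
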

\begin{proof}
We use a chain as in Fig.~\ref{fig: cat chain 2}, where the length (that is, the number of upper/gray circles) is given by $\ell = 5 t +1$ for some $t \in \bb{N}$ (which is the number of lower/turquoise circles). The state after the contraction is $\ket{\cat_{24t + 6}(R_\theta)}$. By decomposing each $\ket{\cat_6(R_\theta)}$ in terms of $4$ stabilizer states we obtain a decomposition of $\ket{\cat_{24t + 6}(R_\theta)}$ using $4^{6t + 1}$ stabilizer states. Acting on the first qubit by $\ket{R_\theta}\bra{R_\theta}$ gives the state $\ket{R_\theta}^{\otimes (24t + 6)}$, and increases the number of stabilizer terms in the decomposition by at most a factor of four (since $\ket{R_\theta}\bra{R_\theta}$ can be written as sum of 4 Paulis). Hence $\chi(R_\theta^{\otimes (24t + 6)}) \leq 2^{12 t + 4}$, which implies the statement.
\end{proof}

The stabilizer decompositions in Theorem \ref{res equatorial} can be used to simulate quantum circuits consisting of Clifford gates and Z-rotations by a fixed angle $\theta$. Indeed,  if $U$ consists of Clifford gates and $m$ instances of a diagonal gate $\exp\left(i \theta Z/2\right)$ then one can write the output state $U\ket{0^{\otimes n}}$ in the form
\begin{align}
U \ket{0}^{\otimes n} = 2^{m/2} \left( I \otimes \bra{0}^{\otimes m} \right) U' \ket{0^{\otimes n}}\ket{R_\theta^{\otimes m}}, 
\end{align}
where $U'$ is an $n+m$-qubit Clifford circuit obtained by replacing each instance of $\exp\left(i \theta Z/2\right)$ with an injection gadget that consumes an ancilla qubit initialized in the state $\ket{R_\theta}$ (see for example, Section 2.3.1 of Ref.~\cite{bravyi2018simulation}). Using the stabilizer decomposition of $\ket{R_\theta}^{\otimes m}$ from Theorem \ref{res equatorial}, we obtain an expression for the output state of the circuit as a superposition of at most $O(2^{m/2})$ stabilizer states. We can then directly compute amplitudes $\langle x|U|0^n\rangle$ using runtime $O(2^{m/2} \text{poly}(n,m))$. We can also estimate marginal probabilities to within an inverse polynomial relative error using the same asymptotic runtime via the ``norm estimation" technique described in Ref. \cite{bravyi2016improved}.

It is in fact possible to extend the exponential scaling in Theorem \ref{res equatorial} to arbitrary symmetric product states. This is a consequence of the following property of the symmetric subspace of $m$ qubits.

\begin{lem}[Theorem 7.5 in \cite{watrous2018theory}]\label{watrous lemma}
Let $S(A) \subseteq \bb{C}^{\otimes 2}$ be the set of vectors with entries in $A \subseteq \bb{C}$. If $|A| \geq m+1$ then the set
$$\{ \ket{v}^{\otimes m} : \ket{v} \in S(A) \}$$ spans the symmetric subspace of $m$ qubits.
\end{lem}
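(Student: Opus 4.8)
The plan is to reduce the claim to a dimension count and then settle that count with a polynomial-interpolation (Vandermonde) argument. Recall that the symmetric subspace of $m$ qubits has dimension $m+1$, with orthonormal (Dicke) basis $\ket{S_k} = \binom{m}{k}^{-1/2}\sum_{x\in\{0,1\}^m:\,|x|=k}\ket{x}$ for $k=0,\dots,m$. For any single-qubit vector $\ket{v}=a\ket{0}+b\ket{1}$ the power $\ket{v}^{\otimes m}$ is permutation invariant and hence already lies in this subspace; expanding in the computational basis and regrouping by Hamming weight gives
\[
\ket{v}^{\otimes m}=\sum_{k=0}^{m}\binom{m}{k}^{1/2}a^{m-k}b^{k}\,\ket{S_k}.
\]
Since every $\ket{v}^{\otimes m}$ sits inside the $(m+1)$-dimensional symmetric subspace, it suffices to show that the family $\{\ket{v}^{\otimes m}:\ket{v}\in S(A)\}$ spans a space of the full dimension $m+1$.

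First I would set up the dual characterization: the span is everything precisely when no nonzero covector annihilates the whole family. Concretely, suppose $(\lambda_0,\dots,\lambda_m)\in\bb{C}^{m+1}$ satisfies $\sum_{k=0}^{m}\lambda_k\binom{m}{k}^{1/2}a^{m-k}b^{k}=0$ for every $(a,b)\in A\times A$; I must then show all $\lambda_k=0$. Writing $\mu_k=\lambda_k\binom{m}{k}^{1/2}$ (note $\binom{m}{k}\neq 0$, so $\mu_k=0$ iff $\lambda_k=0$), the condition says exactly that the homogeneous degree-$m$ polynomial $p(a,b)=\sum_{k}\mu_k a^{m-k}b^{k}$ vanishes on all of $A\times A$.

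The key step is the polynomial-vanishing argument. Because $|A|\geq m+1\geq 2$, the set $A$ contains a nonzero element $a_0$. Freezing the first entry at $a_0$, the map $b\mapsto p(a_0,b)=\sum_k \mu_k a_0^{m-k}b^{k}$ is a univariate polynomial of degree at most $m$ that vanishes at the $\geq m+1$ distinct points of $A$; a nonzero polynomial of degree $\leq m$ can have at most $m$ roots, so $p(a_0,\cdot)$ must be identically zero. Hence each coefficient vanishes, $\mu_k a_0^{m-k}=0$, and since $a_0\neq 0$ we conclude $\mu_k=0$ and therefore $\lambda_k=0$ for all $k$. This establishes that the span has dimension $m+1$ and thus equals the symmetric subspace.

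The argument is in essence a Vandermonde fact, so I expect no serious obstacle; the points to handle with care are the dimension bookkeeping (verifying the symmetric subspace really has dimension $m+1$ and that each $\ket{v}^{\otimes m}$ lies in it) and the observation that the weights $\binom{m}{k}^{1/2}$ are nonzero and hence harmless. It is worth noting that the hypothesis $|A|\geq m+1$ is exactly what the root-counting step consumes, and is tight, since a single-variable degree-$m$ polynomial can be made to vanish at $m$ prescribed points. An equivalent, more computational route would exhibit $m+1$ explicit vectors $\ket{v_j}=a_0\ket{0}+b_j\ket{1}$ with distinct $b_j\in A$ and check that the resulting coordinate matrix is a scaled Vandermonde matrix of nonzero determinant; the annihilator formulation above has the advantage of avoiding that determinant computation.
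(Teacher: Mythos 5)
Your proof is correct. One thing to note at the outset: the paper does not prove this lemma at all --- it is imported verbatim as Theorem 7.5 of Watrous's textbook --- so there is no internal proof to compare against; any self-contained argument is added value here. Your argument (expand $\ket{v}^{\otimes m}$ in the Dicke basis, pass to the annihilator of the family inside the $(m+1)$-dimensional symmetric subspace, and kill the putative covector by observing that the univariate polynomial $b \mapsto p(a_0,b)$ of degree at most $m$ vanishes at the $\geq m+1$ distinct points of $A$) is essentially the qubit specialization of the textbook proof: Watrous's argument also identifies the pairing of a symmetric vector against $u^{\otimes m}$ with a polynomial in the entries of $u$ and concludes from its vanishing on the grid $A^{\times d}$ that the vector is zero, but since he treats arbitrary local dimension $d$ he must iterate the root-counting argument one coordinate at a time over a multivariate polynomial. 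Restricting to qubits, a single application of ``a nonzero polynomial of degree $\leq m$ has at most $m$ roots'' suffices, which is why your write-up is short. Two minor points are worth making explicit: the step ``the span equals the symmetric subspace iff no nonzero linear functional annihilates the family'' uses finite-dimensionality (fine here), and the existence of a nonzero $a_0 \in A$ uses $|A| \geq 2$, i.e.\ $m \geq 1$; the case $m=0$ is trivial. Also note the lemma as used in Theorem \ref{res general} of the paper needs exactly this statement with $S(A)$ consisting of unnormalized vectors, which your computation handles correctly since nothing in it assumes $\lVert v \rVert = 1$.
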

In light of Theorem \ref{res equatorial} the above lemma implies the following bound.
\begin{theorem}\label{res general}
For every single-qubit state $\ket{\psi}$ we have
$
\chi(\psi^{\otimes m}) = O\left( (m+1)2^{m/2} \right).
$
\end{theorem}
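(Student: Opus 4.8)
The plan is to express an arbitrary symmetric product state $\ket{\psi}^{\otimes m}$ as a linear combination of a small number of symmetric product states $\ket{R_\theta}^{\otimes m}$ built from equatorial states, and then invoke the bound $\chi(R_\theta^{\otimes m}) = O(2^{m/2})$ from Theorem \ref{res equatorial} together with sub-additivity of the stabilizer rank under linear combinations. The key observation is that, up to a single-qubit invertible linear map applied to each factor, any $\ket{\psi}^{\otimes m}$ lies in the span promised by Lemma \ref{watrous lemma}; the real content is to arrange the set $A$ so that the spanning vectors are (Clifford-images of) equatorial states, for which we already control the stabilizer rank.

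First I would fix a single-qubit state $\ket{\psi}$ and seek a single-qubit invertible operator $M$ such that $M\ket{\psi}$ is proportional to an equatorial state $\ket{R_\theta}$; generically this is possible, and degenerate cases (where $\ket{\psi}$ is a computational basis state) can be handled separately since such product states are already stabilizer states with $\chi = 1$. Applying $M^{\otimes m}$ commutes with the tensor-power structure, so it suffices to bound the stabilizer rank of states of the form obtained by a fixed single-qubit linear map applied to each of $m$ equatorial factors. Next I would apply Lemma \ref{watrous lemma} with a finite set $A \subseteq \bb{C}$ of size $m+1$ chosen so that each vector $\ket{v} \in S(A)$ used in the spanning set is, after normalization, an equatorial state $\ket{R_{\theta_v}}$ (for instance by taking the $m+1$ values to parametrize distinct points on the equator). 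Since these $m+1$ product states $\{\ket{v}^{\otimes m}\}$ span the symmetric subspace, and $\ket{\psi}^{\otimes m}$ lies in that subspace, we can write
\begin{equation}
\ket{\psi}^{\otimes m} = \sum_{j=1}^{m+1} c_j \, \ket{R_{\theta_j}}^{\otimes m}
\end{equation}
for suitable coefficients $c_j$.

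By sub-additivity of the stabilizer rank under sums, $\chi(\psi^{\otimes m}) \leq \sum_{j=1}^{m+1} \chi(R_{\theta_j}^{\otimes m})$, and each summand is $O(2^{m/2})$ by Theorem \ref{res equatorial}. Since there are $m+1$ terms, this yields $\chi(\psi^{\otimes m}) = O\!\left((m+1)2^{m/2}\right)$, as claimed.

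The main obstacle I anticipate is the reduction to equatorial states: Lemma \ref{watrous lemma} guarantees a spanning set of product states $\ket{v}^{\otimes m}$ with $\ket{v} \in S(A)$, but it does not by itself guarantee that these $\ket{v}$ can be taken to be equatorial (equal-weight) states, nor that $\ket{\psi}$ itself can be brought to equatorial form by an invertible map whose tensor power preserves the relevant structure. The careful point is to choose $A$ and the normalization conventions so that every spanning vector is genuinely of the form $\ket{R_\theta}$ up to scalar multiples — the scalars can be absorbed into the coefficients $c_j$, but one must check that the resulting $m+1$ equatorial angles are distinct enough for the spanning property to hold. An alternative that sidesteps the normalization bookkeeping is to prove the bound directly for arbitrary $\ket{v} \in S(A)$, observing that any such (possibly unnormalized) single-qubit vector is a scalar multiple of a state reachable from some $\ket{R_\theta}$ by a single-qubit diagonal Clifford, so that $\chi(v^{\otimes m}) = \chi(R_\theta^{\otimes m}) = O(2^{m/2})$ uniformly; then Lemma \ref{watrous lemma} applies verbatim and the factor $m+1$ comes purely from the number of spanning terms.
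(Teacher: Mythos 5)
Your proof is correct and takes essentially the same route as the paper: choose $A$ to consist of $m+1$ unimodular complex numbers (the paper uses the $(m+1)$-st roots of unity), so that every $\ket{v}\in S(A)$ is a scalar multiple of an equatorial state; expand $\ket{\psi}^{\otimes m}$ in the resulting spanning set via Lemma~\ref{watrous lemma}, absorbing scalars into the coefficients; and finish with sub-additivity of the stabilizer rank and Theorem~\ref{res equatorial}.

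One remark: your opening reduction, which seeks an invertible $M$ with $M\ket{\psi}\propto\ket{R_\theta}$ and claims that ``it suffices'' to bound the rank after applying $M^{\otimes m}$, is both unnecessary and unsound if relied upon. Stabilizer rank is not invariant under $M^{\otimes m}$ for a non-Clifford invertible $M$: such a map sends a stabilizer decomposition to a superposition of generally non-stabilizer states, so no bound on $\chi$ transfers across it. Fortunately your argument never uses this step --- $\ket{\psi}^{\otimes m}$ lies in the symmetric subspace for \emph{every} single-qubit $\ket{\psi}$, so Lemma~\ref{watrous lemma} applies directly, exactly as in your second paragraph --- and the detour should simply be deleted. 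Likewise, the ``distinctness'' worry in your closing paragraph resolves itself: with $A$ the $(m+1)$-st roots of unity, the $(m+1)^2$ vectors of $S(A)$ collapse, up to scalars, to $m+1$ equatorial directions (one per ratio of entries), and the spanning property is supplied by the lemma itself whenever $|A|\geq m+1$; since the symmetric subspace has dimension exactly $m+1$, these states automatically form a basis, so at most $m+1$ terms appear in the expansion.
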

\begin{proof}
Let $A = \{e^{2\pi i s /m+1}: s =0, \dots, m\}$. By Lemma \ref{watrous lemma}, the set $ \{ \ket{v}^{\otimes m} : \ket{v} \in S(A) \} $ spans the symmetric subspace of $m$ qubits. Since $S(A)$ consists of $m+1$ distinct equatorial states up to global phases and normalization, the state $\ket{\psi}^{\otimes m}$ can be written in terms of $m+1$ tensor powers of equatorial states;
\begin{align}
\ket{\psi}^{\otimes m} = \sum_{i = 1}^{m+1} c_i \ket{R_{\theta_i}}^{\otimes m}.
\end{align}
Combining this with Theorem \ref{res equatorial} proves the statement.
\end{proof}

\paragraph{Magic code states}
We can extend the method described above by replacing the magic cat state by a more general entangled state of the form
\begin{align}
\ket{\hat{L}} = 2^{-k/2}\sum_{x \in L} \hket{x},
\end{align}
where $\hket{0} = \ket{T}, \hket{1} = \ket{T^\perp}$, and $L \subseteq \mathbb{F}_2^m$ is a linear space of dimension $k$. Note that $\ket{\cat_m}$ is recovered by taking $L$ to be the $m$-bit repetition code. As with the cat states, every $\hket{L}$ has stabilizer rank upper bounded by $\chi(T^{\otimes m})$, since $\hket{L}$ can be obtained from $\ket{T}^{\otimes m}$ using only Pauli measurements and postselection
$$
\hket{L} \propto \prod_{i=1}^{k} (I + Z(v^i)) \ket{T}^{\otimes m},
$$
where $Z(v):= Z^{v_1} \otimes \cdots \otimes Z^{v_m}$ and $\{ v^i \} \subseteq \mathbb{F}_2^m$ is any basis of $L$.  
\begin{theorem}\label{code bound thm}
Suppose there exist polynomials $p$ and $q$ and a constant $\gamma > 0$ such that 
\begin{align}\label{eq: thm5}
\Omega(2^{\gamma n}/p(n)) \leq \chi(T^{\otimes n}) \leq O(q(n)2^{\gamma n}).    
\end{align}
Then for every linear code $L\subseteq \mathbb{F}_2^m$ of block length $m$ and dimension $k < m/2$ we have
\begin{align}\label{code bound}
\gamma \leq \frac{\log_2(\chi(\hat{L}))}{m - 2k}.
\end{align}
\end{theorem}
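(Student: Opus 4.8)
The plan is to convert a hypothetical low-rank decomposition of $\ket{\hat L}$ into a low-rank decomposition of $\ket T^{\otimes N}$ for arbitrarily large $N$, and then invoke the assumed lower bound $\chi(T^{\otimes N}) \ge \Omega(2^{\gamma N}/p(N))$. Concretely, I would glue $t$ copies of $\ket{\hat L}$ along a one-dimensional chain as in Fig.~\ref{fig: cat chain 1}, contracting each neighbouring pair with the stabilizer bra $\bra{\cat_2}^{\otimes k}$ acting on $k$ qubits from each of the two copies. Working in the magic basis $\{\hat x\}$, where $Z$ acts as a bit-flip ($Z\hat 0 = \hat 1$), the contraction $\bra{\cat_2}$ on a pair of qubits simply identifies their magic labels and removes them. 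Hence the contracted chain is again a code state $\ket{\hat{L_t}}$, now on $N_t = mt - 2k(t-1) = (m-2k)t + 2k$ qubits; and because $\bra{\cat_2}$ is a stabilizer state, contraction does not increase the stabilizer rank, so sub-multiplicativity yields $\chi(\hat{L_t}) \le \chi(\hat L)^t$.

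The essential point is to keep the dimension $k_t$ of the glued code $L_t$ bounded independently of $t$. Each bond imposes $k$ linear matching conditions on the joint magic labels, and these are independent provided the $k$ contracted coordinates on (say) the right side of each copy form an \emph{information set} of $L$, i.e. a coordinate set $S$ with $L|_S = \mathbb{F}_2^k$; such a set exists since $\dim L = k$, and because $m > 2k$ one can choose disjoint size-$k$ left and right coordinate-sets within each copy. With full-rank matchings at all $t-1$ bonds the dimension drops by at least $k$ per bond, so $k_t \le kt - k(t-1) = k$ is a constant. I expect this to be the crux of the argument: the dimension must be prevented from growing, and the naive alternative of using the untangled tensor power $\ket{\hat L}^{\otimes t}$ fails precisely because its code has dimension $kt$, growing linearly in $t$.

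Given $\ket{\hat{L_t}}$ with $k_t \le k$, I would collapse it to a genuine magic-state power by applying the projection $\langle T| = \langle \hat 0|$ to $k_t$ coordinates, each chosen on which the current code is not identically zero so that the code dimension drops by one. This produces the dimension-zero code state $\ket T^{\otimes (N_t - k_t)}$. Since $\langle T| = 2^{-1/2}(\langle 0| + e^{-i\pi/4}\langle 1|)$ is a sum of two computational-basis projections, each of which sends a stabilizer state to a (subnormalized) stabilizer state, every such step at most doubles the stabilizer rank, giving
\[
\chi\!\left(T^{\otimes(N_t - k_t)}\right) \le 2^{k_t}\,\chi(\hat{L_t}) \le 2^{k}\,\chi(\hat L)^t .
\]

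Finally I would combine this with the hypothesis. Because $k_t \le k$ is bounded and $N_t = (m-2k)t + 2k$, we have $(N_t - k_t)/t \to m - 2k$, while $\chi(T^{\otimes(N_t - k_t)}) \ge \Omega(2^{\gamma(N_t - k_t)}/p(N_t - k_t))$. Substituting into the displayed inequality, taking base-$2$ logarithms, dividing by $t$, and letting $t \to \infty$ (so that $\log_2 p$, the additive $k_t$, and the constant $2^k$ all wash out) yields $\gamma(m - 2k) \le \log_2\chi(\hat L)$, which is the claim. The assumption $k < m/2$ is used twice and is essential: it lets each copy host disjoint size-$k$ left and right information sets, and it guarantees $N_t - k_t \to \infty$ so that the asymptotic lower bound on $\chi(T^{\otimes\cdot})$ is nonvacuous. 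As a consistency check, taking $L$ to be the length-$6$ repetition code recovers the cat-chain construction of Fig.~\ref{fig: cat chain 1} and the exponent $\log_2(3)/4$ of Theorem~\ref{res magic}.
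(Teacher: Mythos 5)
Your proof is correct, but it follows a genuinely different route from the paper's. The paper never forms a chain: it writes a generator matrix in standard form $G=[I_{k\times k}\,|\,B]$, notes that the only codeword vanishing on the first $k$ coordinates is $0^m$, so that $\left(\bra{T}^{\otimes k}\otimes I\right)\ket{\hat{L}}\propto\ket{T}^{\otimes (m-k)}$, and then performs this contraction on $\ell$ \emph{unentangled} copies $\ket{\hat{L}}^{\otimes\ell}$ in parallel to get $\chi(T^{\otimes \ell(m-k)})\leq\chi(T^{\otimes \ell k})\,\chi(\hat{L}^{\otimes\ell})$. Because the contracting bras there are magic states rather than stabilizer states, the paper must invoke the \emph{upper}-bound half of Eq.~\eqref{eq: thm5} to control $\chi(T^{\otimes \ell k})$, in addition to the lower-bound half on the left-hand side. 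Your chain contracts only with the stabilizer bras $\bra{\cat_2}$ and pays a one-time factor $2^{k}$ for the final $\bra{T}$ projections, so it consumes only the \emph{lower}-bound half of the hypothesis; you therefore prove a formally stronger statement (weaker hypothesis), and your construction directly generalizes the cat chain of Fig.~\ref{fig: cat chain 1}, tying Theorem~\ref{code bound thm} to the mechanism behind Theorem~\ref{res magic}. The price is the bookkeeping you correctly identified as the crux: the glued code keeps dimension $k$ because the bond constraints, read backwards through an information set, let the codeword on the last copy determine the whole chain. One small correction: your closing claim that $m>2k$ lets each copy host disjoint left and right \emph{information sets} is false in general (for the code spanned by $10000$ and $01111$, every information set contains the first coordinate), but your argument never needs it --- as you state in the main body, only the right-hand contracted set must be an information set, while the left-hand set need only be disjoint from it, and that much $m>2k$ does guarantee.
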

\begin{proof}
Let $G \in \mathbb{F}_2^{k \times m}$ be a generator matrix of $L$, that is, a binary matrix whose rows span $L$. Up to re-ordering the qubits we may assume that $G$ is in the form
$$
G = [I_{k \times k} \;|\; B],
$$
for a binary matrix $B$ of size $k$ by $(m-k)$.
As $L$ is spanned by the rows of $G$, the constraints $x \in L, x_1 = x_2 =  \cdots = x_k = 0$ are uniquely satisfied by $x = 0^m$. We infer that
\begin{align}\label{eq code contraction}
\left( \bra{T}^{\otimes k} \otimes I\right) \ket{\hat{L}} \propto \ket{T}^{\otimes m-k},
\end{align}
which in turn implies $\chi(T^{\otimes m-k}) \leq \chi(T^{\otimes k}) \chi(\hat{L}).$ Repeating the contraction in Eq.~\eqref{eq code contraction} in parallel $\ell$ times gives
$
\chi(T^{\otimes \ell(m-k)}) \leq \chi(T^{\otimes \ell k}) \chi(\hat{L}^{\otimes \ell}). 
$
Letting $\ell$ get large and using Eq.~\eqref{eq: thm5} yields 
$2^{\gamma\ell(m-k)} \leq f(\ell) 2^{\gamma \ell k} \chi(\hat{L}^{\otimes \ell})$, for some polynomial $f(\ell)$. Taking the binary log, rearranging, and using sub-multiplicativity of the stabilizer rank we get
\begin{align}
\gamma \leq \frac{\log_2(f(\ell)\chi(\hat{L}^{\otimes \ell}))}{\ell(m- 2k)} \leq \frac{\log_2(\chi(\hat{L}))}{m - 2k} \;+ \; \frac{\log_2(f(\ell))}{\ell(m-2k)}.
\end{align}
The second term vanishes for large $\ell$ which completes the proof.
\end{proof}
The upper bound from Theorem \ref{res magic} is recovered from Eq.~\eqref{code bound} by setting $L$ to be the six-bit repetition code (so that $\hket{L} = \ket{\cat_6}$). 
More generally, to use  Eq.~\eqref{code bound}  we need an upper bound on $\chi(\hat{L})$ which is significantly better than the trivial upper bound $\chi(\hat{L}) \leq 2^{m- k}$. The latter upper bound is obtained by noting 
\begin{align}\label{code state projection 2}
\hket{L} \propto \left( \prod_{i=1}^{m-k} (I + A(u^i)) \right) \ket{0}^{\otimes m},   
\end{align}
where $\{u^i\}$ is any basis of $L^\perp$, and we define the Clifford operators $A:= e^{-i\pi/4} S X$ and $A(u) = A^{u_1} \otimes \cdots \otimes A^{u_m}$.
(To see this, note that $\ket{0}^{\otimes m} \propto \sum_x \hket{x}$, $A\hket{0}=\hket{0}$, and $A\hket{1} = - \hket{1}$).

Theorem \ref{code bound thm} requires $k < m/2$, which is somewhat restrictive. Relaxing this requirement we can easily find magic code states with low stabilizer rank. For example, every Reed-Muller code $R_{a,b}$ (of degree $a$ on $b$ variables) corresponds to a stabilizer state if the degree is large enough. 
This is true since for any code $L$ we can express $\hket{L}$ in the computational basis as $\hket{L} \propto \sum_{x \in L^\perp}e^{i\pi |x|/4}\ket{x}$, and if $\lceil b/(b-a-1) \rceil \geq 4$ then $|x| = 0 \mod 8$ for all $x \in R_{a,b}^\perp$ \cite{macwilliams1977theory}. In this case we have $\ket{\hat{R}_{a,b}} \propto \sum_{x \in R_{a,b}^{\perp}} \ket{x}$, and hence it is a stabilizer state.
Another example is the magic code state associated with the 8-bit first order Reed-Muller code. Since $R_{1,3}= R_{1,3}^\perp$ and $|x| = 4$ for all $x \in R_{1,3}$ except the all-zero and all-one binary strings, we have the stabilizer decomposition
\begin{align}
\ket{\hat{R}_{1,3}}\propto 2(\ket{0}^{\otimes 8} + \ket{1}^{\otimes 8}) - \sum_{x \in {R_{1,3}}} \ket{x},
\end{align}
hence $\chi(\hat{R}_{1,3}) = 2$ (one can easily show that $\ket{\hat{R}_{1,3}}$ is not a stabilizer state).

In the preceding examples the code dimension $k$ exceeds the threshold $k < m/2$ required by Theorem \ref{code bound thm}, and so one cannot obtain an upper bound on $\chi(T^{\otimes n})$ that way. We leave it as an open question whether the bound in Theorem \ref{res magic} can be improved using this formulation in terms of linear codes, i.e. whether there exists a linear code $L$ of block length $m$ and dimension $k < m/2$ such that $\log_2(\chi(\hat{L}))/(m-2k) < \log_2(3)/4$.
\section*{Acknowledgments}
This research was funded in part by Army Research Office Grant number W911NF-14-1-0103, and by Transformative Quantum Technologies (TQT). HP acknowledges the support of the Natural Sciences and Engineering Research Council of Canada (NSERC) discovery grant RGPIN-2018-05188. Research at Perimeter Institute is supported in part by the Government of Canada through the Department of Innovation, Science and Economic Development Canada and by the Province of Ontario through the Ministry of Colleges and Universities. DG acknowledges the support of the Natural Sciences and Engineering Research Council of Canada through grant number RGPIN-2019-04198, the Canadian Institute for Advanced Research, and IBM Research. We thank Sergey Bravyi, Alex Hu, Anirudh Krishna, and Joel Wallman for helpful discussions related to this work. 

\bibliography{library}
\bibliographystyle{unsrt}

\appendix
\section{Stabilizer rank of small magic cats}
Here we provide further details of the claimed values of $\chi(\cat_m)$ listed in Table \ref{tab}.  

First let us consider the upper bounds. We have already shown $\chi(\cat_2)\leq 2$ and $\chi(\cat_6)\leq 3$ in the main text. Using the fact that
\[
\langle 0|_m ~|\cat_m\rangle \propto |\cat_{m-1}\rangle
\]
we see that $\chi(\cat_{m})\geq \chi(\cat_{m-1})$ for all $m$. Therefore $\chi(\cat_5)\leq 3$ as well. We have $\chi(\cat_3)\leq\chi(\cat_4)\leq 2$ due to the decomposition
\[
|\cat_4\rangle=i|E\rangle+\left(\frac{1-i}{2}\right)\frac{1}{\sqrt{2}}\left(|0^4\rangle-i|1^4\rangle\right) \qquad \text{where} \qquad |E\rangle=\frac{1}{\sqrt{2^3}}\sum_{x\in \{0,1\}^4: |x| even} |x\rangle.
\]
Finally, we have $\chi(\cat_7)\leq \chi(\cat_8)\leq 6$ which follows from
\[
 \langle \cat_2|_{4,5}~|\cat_4\rangle|\cat_6\rangle\propto |\cat_8\rangle.
\]
The left hand side is easily seen to have rank at most $6$ since $\cat_2$ is a stabilizer state and $\chi(\cat_4)\leq 2$ and $\chi(\cat_6)\leq 3$.

Next let us discuss the matching lower bounds for $2\leq m\leq 6$. Clearly there is nothing to show for $m=2$ since all nonzero states have stabilizer rank at least $1$.

To establish the lower bounds for $m=3,\ldots , 6$ it will be convenient to define the \textit{Pauli spectrum} \cite{beverland2020lower} of an $m$-qubit state $\psi$ as the multiset
\[
\mathrm{PS}(\psi)=\left\{\langle \psi|P|\psi\rangle: P\in \mathcal{P}_m\right\}
\]
where $\mathcal{P}_m$ is the set of $m$-qubit Pauli operators. Note that all elements of the Pauli spectrum of a stabilizer state are either $+1,-1,0$. Moreover, we have $\mathrm{PS}(C|\psi\rangle)=\mathrm{PS}(|\psi\rangle)$ for any Clifford unitary $C$.

To show that $\chi(\cat_4)\geq \chi(\cat_3)\geq 2$ we need only show that $\cat_3$ is not a stabilizer state. To see this we note that $PS(\cat_3)$ contains elements which are not $+1,-1,0$. In particular,  $\langle \cat_3| X\otimes X\otimes I|\cat_3\rangle=1/2$ and therefore $\ket{\cat_3}$ is not a stabilizer state.

In the remainder of this appendix we show that $\chi(\cat_6)\geq \chi(\cat_5)\geq 3$. We use the fact that rank-2 states can be brought into a canonical form using the following lemma.

\begin{lem}
Let $\phi_1,\phi_2$ be two $n$-qubit stabilizer states. There exists a Clifford unitary $C$ such that 
\begin{equation}
C|\phi_1\rangle=|0^n\rangle \qquad \text{and} \qquad C|\phi_2\rangle\propto |1^a0^{b} +^{n-a-b}\rangle,
\label{eq:cprop}
\end{equation}
for some $a\in \{0,1\}$ and $0\leq b\leq n$.
\label{lem:gen}
\end{lem}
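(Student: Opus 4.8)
The plan is to reduce any rank-2 stabilizer decomposition to a normal form by exploiting the Clifford group's transitive action on stabilizer states. First I would use the well-known fact that the Clifford group acts transitively on the set of $n$-qubit stabilizer states: given the pair $\phi_1,\phi_2$, I can choose a Clifford $C_1$ that maps $\phi_1$ to the standard state $\ket{0^n}$. Applying $C_1$ to both states, it then suffices to classify, up to the \emph{stabilizer} of $\ket{0^n}$ (i.e. Cliffords fixing $\ket{0^n}$ up to phase), the image $C_1\ket{\phi_2}$, which is again an arbitrary stabilizer state. So the real content is: given that $\phi_1=\ket{0^n}$, find a Clifford $C_2$ preserving $\ket{0^n}$ (so that $C_2 C_1 \ket{\phi_1}\propto\ket{0^n}$ still holds) such that $C_2 C_1 \ket{\phi_2}$ takes the claimed form $\ket{1^a0^b+^{n-a-b}}$.

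The key structural tool is the description of a stabilizer state by its stabilizer group $\mathcal{S}\subseteq \mathcal{P}_n$, an abelian group of $2^n$ commuting Pauli operators (up to phase) with $-I\notin\mathcal{S}$. The subgroup of Cliffords fixing $\ket{0^n}$ acts on Pauli operators by conjugation and is generated by operations whose symplectic action is well understood: Hadamards, phase gates, and CNOTs restricted so as to stabilize the $\ket{0^n}$ stabilizer group $\langle Z_1,\dots,Z_n\rangle$. I would track the state $C_1\ket{\phi_2}$ through its stabilizer group and bring it to canonical form by standard Gaussian elimination on the symplectic representation (the check matrix): conjugating by CNOTs performs row operations, conjugating by Hadamards on a subset of qubits swaps $X$ and $Z$ blocks, and phase gates adjust the remaining degrees of freedom, exactly as in the standard proof that every stabilizer state is Clifford-equivalent to a ``graph state'' or product-of-GHZ normal form. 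The target form $\ket{1^a0^b+^{n-a-b}}$ is precisely the generic canonical form: a block of qubits fixed to $\ket{1}$ or $\ket{0}$ (stabilized by $\pm Z$) and a block in the $\ket{+}$ state (stabilized by $X$); the single bit $a\in\{0,1\}$ records whether $\phi_2$ is orthogonal to $\ket{0^n}$ in the $\ket{1}$-block.

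I would carry this out in the following order. Step one: apply $C_1$ to normalize $\phi_1\mapsto\ket{0^n}$. Step two: consider the stabilizer group of $\phi_2':=C_1\phi_2$ and, using only Cliffords that fix $\ket{0^n}$, perform symplectic Gaussian elimination to put its check matrix into a standard form in which the generators split into pure-$Z$ generators (giving the fixed qubits) and pure-$X$ generators (giving the $\ket{+}$ qubits). Step three: use phase and Pauli corrections to remove residual signs, collapsing the fixed block to all $\ket{0}$'s or, in the single exceptional case, to a pattern Clifford-equivalent to $1^a 0^b$ with $a\in\{0,1\}$; by permuting qubits (a Clifford preserving $\ket{0^n}$) I reorder these blocks into the contiguous form $\ket{1^a0^b+^{n-a-b}}$. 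The composition $C := C_2 C_1$ of these Cliffords then satisfies both conditions of Eq.~\eqref{eq:cprop}.

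The main obstacle I anticipate is the bookkeeping needed to ensure that every elimination step is implemented by a Clifford that \emph{preserves} $\ket{0^n}$, so that the first equation $C\ket{\phi_1}=\ket{0^n}$ is not spoiled while normalizing $\phi_2$. Concretely, after fixing $\phi_1\mapsto\ket{0^n}$ my remaining gauge freedom is only the stabilizer subgroup of $\ket{0^n}$, which is strictly smaller than the full Clifford group, so I cannot freely apply arbitrary Hadamards; I must verify that the restricted set of operations (CNOTs, $S$ gates, qubit permutations, and carefully chosen Hadamards whose action on the $Z$-generators is compensated) still suffices to reach the canonical form. The key simplification that makes this work is that $\ket{0^n}$ and $\phi_2'$ together span only a two-dimensional subspace, so the relevant data is essentially the single relation between their stabilizer groups, and the classification reduces to the overlap $\inner{0^n}{\phi_2'}$ being zero or nonzero, yielding the two cases $a=1$ and $a=0$ respectively.
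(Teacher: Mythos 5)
Your proposal is correct and follows essentially the same route as the paper: first map $\phi_1$ to $\ket{0^n}$, then canonicalize $\phi_2$ using only the residual Clifford freedom that fixes $\ket{0^n}$, with the case split $a=1$ versus $a=0$ determined by whether $\inner{\phi_1}{\phi_2}$ vanishes. The only substantive difference is that the paper obtains the product normal form in one step by invoking the CH-form $D\ket{\phi_2}\propto U_c H(r)\ket{s}$ of Ref.~\cite{bravyi2018simulation} (whose proof is precisely the restricted symplectic elimination you sketch), and your worry about needing ``compensated Hadamards'' is moot: every Clifford fixing $\ket{0^n}$ up to phase is generated by CNOT, CZ, $S$ and $Z$ gates, which is exactly why the elimination goes through with that gate set alone.
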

\begin{proof}
Let $D$ be a Clifford such that $D|\phi_1\rangle=|0^n\rangle$ and consider the stabilizer state $D|\phi_2\rangle$. Let $H$ be the single-qubit Hadamard on qubit $j\in [n]$ and write $H(y)=\prod_{j=1}^{n} (H_j)^{y_j}$ for any binary string $y\in \{0,1\}^n$.  Since $D|\phi_2\rangle$ is a stabilizer state it can be written in the so-called CH-form \cite{bravyi2018simulation} as
\begin{equation}
D|\phi_2\rangle \propto U_c H(r)|s\rangle
\label{eq:ch}
\end{equation}
for some $r,s\in \{0,1\}^n$ and some Clifford $U_c$ such that $U_c|0^n\rangle=|0^n\rangle$. 
We can furthermore find a permutation of the qubits $P$ such that
$$
P H(r) |s\rangle = \ket{x} \otimes H^{|r|} \ket{y}
$$
for some binary strings $x\in \{0,1\}^{n-|r|}, y\in \{0,1\}^{|r|}$.

First consider the case in which $\langle \phi_1|\phi_2\rangle\neq 0$.
Then $x=0$, and one can directly confirm that $C = (I\otimes Z(y))P(U_c)^{\dagger} D $ satisfies Eq.~\eqref{eq:cprop} with $a=0$ and $b=n-|r|$.

Next suppose  $\langle \phi_1|\phi_2\rangle=0$. In this case $x\neq 0$ and therefore we may choose the permutation $P$ such that $x=1x'$ for some $(n-|r|-1)$-bit string $x'$. We then choose $C=(V\otimes Z(y))P(U_c)^{\dagger} D$ where $V$ is a product of $CNOT$ gates controlled on the first bit of $x$ such that $V|x\rangle=|10^{n-|r|-1}\rangle$. Again we can directly confirm Eq.~\eqref{eq:cprop} with $a=1$ and $b=n-|r|-1$.
\end{proof}

\begin{lem}
$\chi(\mathrm{cat}_5) > 2$ .
\end{lem}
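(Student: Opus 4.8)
The plan is to argue by contradiction using Lemma~\ref{lem:gen} together with the Clifford invariance of the Pauli spectrum. Suppose $\chi(\cat_5)\le 2$; since $\chi(\cat_5)\ge\chi(\cat_3)\ge 2$ (because $\cat_3$ is not a stabilizer state), this means $\ket{\cat_5}=c_1\ket{\phi_1}+c_2\ket{\phi_2}$ for stabilizer states $\phi_1,\phi_2$ and nonzero $c_1,c_2$ (if a coefficient vanished, $\cat_5$ would itself be a stabilizer state). Applying Lemma~\ref{lem:gen} to $\phi_1,\phi_2$ gives a Clifford $C$ with $C\ket{\cat_5}=\alpha\ket{0^5}+\beta\ket{1^a0^b+^{d}}$, where $a\in\{0,1\}$, $d=5-a-b$, and $\alpha,\beta\neq 0$. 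Because $\mathrm{PS}(C\ket{\cat_5})=\mathrm{PS}(\ket{\cat_5})$, it then suffices to show that no state of this canonical form can reproduce the Pauli spectrum of $\ket{\cat_5}$.

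First I would compute $\mathrm{PS}(\cat_5)$ explicitly from $\ket{\cat_5}=\tfrac14\sum_{|x|\ \mathrm{even}}e^{i\pi|x|/4}\ket{x}$. A short calculation shows $\langle\cat_5|P|\cat_5\rangle=0$ unless the $X$-support $u$ of $P$ has even weight, and (after a quick check fixing the $Z$-part) that every nonzero value has magnitude $2^{-|u|/2}$ with $|u|\in\{0,2,4\}$. Counting yields the histogram: exactly $2$ Paulis of magnitude $1$ (namely $I$ and $Z^{\otimes 5}$, so the stabilizer group is precisely $\{I,Z^{\otimes 5}\}$), $80$ of magnitude $\tfrac12$, and $160$ of magnitude $\tfrac14$, in agreement with $\sum_P\langle P\rangle^2=2^5$. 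Two features will do the work: $\cat_5$ has exactly two unit-magnitude Pauli expectations, and no nonzero magnitude is attained by more than $160$ Paulis.

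Next I would test the canonical form $\ket{\chi}=\alpha\ket{0^5}+\beta\ket{w}$ with $\ket{w}=\ket{1^a0^b+^d}$ against these features. The $b$ ``zero'' qubits carry $\ket{0}$ in both components, so each corresponding single-qubit $Z$ stabilizes $\ket{\chi}$ with eigenvalue $+1$; this forces at least $2^{b}$ unit-magnitude Paulis, and matching the count $2$ requires $b\le 1$. For the surviving forms I would expand $\langle\chi|P|\chi\rangle=|\alpha|^2\langle 0^5|P|0^5\rangle+|\beta|^2\langle w|P|w\rangle+2\mathrm{Re}(\alpha^*\beta\langle 0^5|P|w\rangle)$ and follow the cross term: it is supported on $\sim 2^{a+b+2d}$ Paulis and takes magnitudes $2^{1-d/2}|\mathrm{Re}(\alpha^*\beta)|$ and $2^{1-d/2}|\mathrm{Im}(\alpha^*\beta)|$ according to the parity of the number of $Y$ factors. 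All surviving $b\le 1$ forms have $d\ge 3$, and for $d\ge 3$ this places more than $160$ Paulis at a single nonzero magnitude, contradicting the histogram unless $\mathrm{Re}(\alpha^*\beta)=\mathrm{Im}(\alpha^*\beta)=0$, i.e.\ $\alpha^*\beta=0$, which is the degenerate (rank one) case.

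I expect the main obstacle to be the single borderline form $(a,b,d)=(1,1,3)$, where the cross term splits its $256$ Paulis into two groups of $128$, each below the threshold $160$, so the crude counting bound does not apply. Here I would compute the full spectrum: requiring all values to have magnitude in $\{0,\tfrac14,\tfrac12\}$ forces $|\alpha|^2=|\beta|^2=\tfrac12$, and then either $|\mathrm{Re}(\alpha^*\beta)|=|\mathrm{Im}(\alpha^*\beta)|$, which merges the two groups into $256$ Paulis of magnitude $\tfrac14$ (exceeding $160$), or one of them vanishes, which pushes the remaining cross-term Paulis to magnitude $1/(2\sqrt2)\notin\{0,\tfrac14,\tfrac12\}$; either way the spectra disagree. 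This finishes all cases and gives $\chi(\cat_5)>2$. The genuine difficulty is not conceptual but the careful bookkeeping of Pauli expectation values in this last configuration.
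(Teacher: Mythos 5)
Your proof is correct, and while it shares the paper's skeleton --- argue by contradiction, reduce to the canonical form of Lemma~\ref{lem:gen}, and exploit Clifford invariance of the Pauli spectrum --- it finishes by a genuinely different route. The paper uses only the fact that all Pauli expectations of $\ket{\cat_5}$ have magnitude in $\{0,1/4,1/2,1\}$: it forces $b=1$ via the stabilizer-group count, solves a few explicit expectation constraints to narrow $\gamma$ down to finitely many candidates ($\gamma\in\{(\pm1\pm i)/\sqrt{2}\}$ in the orthogonal case, $\gamma\in\{\pm i,-1/2\}$ in the non-orthogonal case), and then rules these out by a \emph{computer} comparison of full Pauli spectra (710 vs.\ 782 zero entries). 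You instead compute the complete histogram of $\mathrm{PS}(\cat_5)$ analytically --- $2$ Paulis at magnitude $1$, $80$ at $1/2$, $160$ at $1/4$, $782$ at $0$; your counts are right, since they give $242$ nonzero entries matching the paper's $1024-782$, and $\sum_P\langle P\rangle^2=2+80/4+160/16=32$ --- and then kill every canonical form by multiplicity counting on the cross-term Paulis. Notably, your treatment of the borderline form $(a,b,d)=(1,1,3)$ (which is exactly the paper's orthogonal case) replaces the paper's numerical check: the four surviving candidates place $256>160$ Paulis at magnitude $1/4$, which is consistent with the paper's finding of only $710$ zeros for those states. The trade-off is clear: the paper's route needs only a handful of hand-computed expectations plus machine verification of two multisets, whereas yours requires deriving the full histogram but is entirely computer-free and self-contained. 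One step you should make explicit when writing this up: in the $a=0$ forms, some cross-term Paulis also carry diagonal contributions $|\alpha|^2\braket{0^5}{P}{0^5}+|\beta|^2\braket{w}{P}{w}$, so their expectations are not pure cross terms; these number at most $2\cdot 2^{b+d}\leq 64$ and all have an even number of $Y$ factors, so after discarding them each parity class still contains more than $160$ Paulis at a single cross-term magnitude (e.g.\ at least $192$ for $(0,1,4)$), and your counting argument survives intact.
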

\begin{proof}
Below we shall use the following fact about the Pauli spectrum of $\cat_5$:
\begin{equation}
|\langle \cat_5|P|\cat_5\rangle|\in \{0,1/4,1/2,1\} \quad \text{for all } \quad P\in \mathcal{P}_5
\label{eq:catp}
\end{equation}
Since there are Paulis for which $|\langle \cat_5|P|\cat_5\rangle|\notin\{0,\pm 1\}$, we know that $\cat_5$ is not a stabilizer state.

Suppose that $\cat_5$ can be expressed as a superposition of two stabilizer states. Then by Lemma \ref{lem:gen} there is a Clifford $C$ such that 
\[
C|\cat_5\rangle \propto |0^n\rangle+\gamma |1^a0^b+^{n-a-b}\rangle
\]
for some $a\in \{0,1\}$, $0\leq b\leq n$, and $\gamma\in \mathbb{C}$. The fact that $\ket{\cat_5}$ is not a stabilizer state implies $|\gamma|>0$.
We can furthermore assume that $|\gamma| \leq 1$, if necessary by applying a Clifford that permutes $\ket{0^n}$ and $\ket{1^a 0^b +^{n-a-b}}$ then dividing by $|\gamma|$.

First let us suppose (to reach a contradiction) that $\cat_5$ can be written as a superposition of two orthogonal stabilizer states, i.e., the case $a=1$. Note that 
\[
2^b=\left| \{P\in \mathcal{P}_n: P|\cat_5\rangle=|\cat_5\rangle\}\right|,
\]
and therefore $b=1$. Thus we may find a Clifford $C$ such that
\begin{equation}
C|\cat_5\rangle =\frac{1}{\sqrt{1+|\gamma|^2}}\left( |0^5\rangle+\gamma |10+^{3}\rangle\right).
\label{eq:orthcase}
\end{equation}
Let us write $\gamma=re^{i\theta}$ with $0<r\leq 1$ and $\theta\in [0,2\pi)$. Now we see that 
\begin{align*}
\left|\langle \cat_5|C^{\dagger}Z_1C|\cat_5\rangle\right|&=\left|\frac{1-r^2}{1+r^2}\right| \\
\left|\langle \cat_5|C^{\dagger}X_1C|\cat_5\rangle\right|&=\left|\frac{r\cos(\theta)}{\sqrt{2}(1+r^2)}\right| \\
 \left|\langle \cat_5|C^{\dagger}X_1Y_4C|\cat_5\rangle\right|&=\left|\frac{r\sin(\theta)}{\sqrt{2}(1+r^2)}\right|.
\end{align*}
Recall from Eq.~\eqref{eq:catp} that each of these three Pauli expectation values must be in the set $\{0,1/4,1/2, 1\}$. Solving for $r, \theta$ which satisfy this constraint we get
\[
\gamma\in \{ \frac{1+i}{\sqrt{2}}, \frac{1-i}{\sqrt{2}}, \frac{-1-i}{\sqrt{2}},\frac{-1+i}{\sqrt{2}}\}
\]
For each of the four states Eq.~\eqref{eq:orthcase} corresponding to the above choices of $\gamma$ we computed the Pauli spectrum using a computer and compared with that of $|\mathrm{cat}_5\rangle$. We found that these multisets were different (in particular, the number of Paulis with expected value equal to zero is $710$ for each of the four candidate states, but $782$ for $|\mathrm{cat}_5\rangle$) and therefore we conclude that $\cat_5$ cannot be written as a superposition of two orthogonal stabilizer states.

Next let us suppose  (to reach a contradiction) that $\cat_5$ can be written as a superposition of two nonorthogonal stabilizer states, i.e., the case $a=0$. Again we may take $b=1$, so 
\begin{equation}
C|\cat_5\rangle=\frac{1}{\left(1+|\gamma|^2+\mathrm{Re}(\gamma)/4\right)^{1/2}}(|0^5\rangle+\gamma |0+^{4})\rangle.
\label{eq:nonorth}
\end{equation}
By a direct calculation we have
\[
\left|\langle \cat_5|C^{\dagger}Z_2X_3C|\cat_5\rangle\right|=\left|\frac{r\cos(\theta)}{2}\left(1+r^2+\frac{r\cos(\theta)}{2}\right)^{-1}\right|
\]
and
\[
\left|\langle \cat_5|C^{\dagger}Y_2C|\cat_5\rangle\right|=\left|\frac{r\sin(\theta)}{2}\left(1+r^2+\frac{r\cos(\theta)}{2}\right)^{-1}\right|.
\]
Recall from Eq.~\eqref{eq:catp} that both of these quantities must take values in $\{0,1/4,1/2,1\}$. Enforcing this constraint and solving for $0<r\leq 1$ and $\theta\in [0,2\pi)$ we find $\gamma=re^{i\theta}$ must satisfy
\[
\gamma \in \{\pm i,  -\frac{1}{2}\}.
\]
Thus we have three candidate states Eq.~\eqref{eq:nonorth} corresponding to the above three choices of $\gamma$. For  each of them we computed the Pauli spectrum using a computer and compared with that of $|\mathrm{cat}_5\rangle$. We found that these multisets were different and therefore reach a contradiction. We conclude that $|\mathrm{cat}_5\rangle$ cannot be written as a superposition of two stabilizer states.
\end{proof}
\end{document}